\documentclass[conference]{IEEEtran}
\usepackage{comment}
\usepackage{dependencies}
\IEEEoverridecommandlockouts
\usepackage{cite}
\usepackage{braket}
\usepackage{amsmath,amssymb,amsfonts,amsthm}
\newtheorem{theorem}{Theorem}[]
\newtheorem{lemma}[theorem]{Lemma}
\usepackage{algorithmic}
\usepackage{graphicx}
\usepackage{textcomp}
\usepackage{xcolor}

\usepackage{hyperref}

\def\BibTeX{{\rm B\kern-.05em{\sc i\kern-.025em b}\kern-.08em
    T\kern-.1667em\lower.7ex\hbox{E}\kern-.125emX}}
    
\begin{document}

\title{Optimal Scaling Quantum Interior Point Method for Linear Optimization 
}

\author{
\centering
\IEEEauthorblockN{Mohammadhossein Mohammadisiahroudi}
\IEEEauthorblockA{\textit{Mathematics and Statistics} \\
\textit{Quantum Science Institute} \\
\textit{University of Maryland, Baltimore County}\\
Baltimore, MD \\
mhms379@umbc.edu}

\and

\IEEEauthorblockN{Zeguan Wu}
\IEEEauthorblockA{\textit{Computer Science} \\
\textit{University of Pittsburgh}\\
Pittsburgh, PA \\
zew79@pitt.edu}

\and

\IEEEauthorblockN{Pouya Sampourmahani}
\IEEEauthorblockA{\textit{Industrial and Systems Engineering} \\
\textit{Lehigh University}\\
Bethlehem, PA \\
pos220@lehigh.edu}

\and

\IEEEauthorblockN{Jun-Kai You}
\IEEEauthorblockA{\textit{Industrial and Systems Engineering} \\
\textit{Lehigh University}\\
Bethlehem, PA \\
juy324@lehigh.edu}

\and

\IEEEauthorblockN{Tam\'as Terlaky}
\IEEEauthorblockA{\textit{Industrial and Systems Engineering} \\
\textit{Lehigh University}\\
Bethlehem, PA \\
terlaky@lehigh.edu}
}

\IEEEpubid{\begin{minipage}[t]{\textwidth}\ \\[10pt]
\centering
979-8-3315-5736-2/25/\$31.00 \copyright 2025 IEEE. Personal use of this material is permitted. Permission from IEEE must be obtained for all other uses, in any current or future media, including reprinting/republishing this material for advertising or promotional purposes, creating new collective works, for resale or redistribution to servers or lists, or reuse of any copyrighted component of this work in other works. DOI 10.1109/QCE65121.2025.00044
\end{minipage}}

\maketitle

\begin{abstract}
The emergence of huge-scale, data-intensive linear optimization (LO) problems in applications such as machine learning has driven the need for more computationally efficient interior point methods (IPMs). While conventional IPMs are polynomial-time algorithms with rapid convergence, their per-iteration cost can be prohibitively high for dense large-scale LO problems. Quantum linear system solvers have shown potential in accelerating the solution of linear systems arising in IPMs. 
In this work, we introduce a novel almost-exact quantum IPM, where the Newton system is constructed and solved on a quantum computer, while solution updates occur on a classical machine. Additionally, all matrix-vector products are performed on the quantum hardware. This hybrid quantum-classical framework achieves an optimal worst-case scaling of $\mathcal{O}(n^2)$ for fully dense LO problems. To ensure high precision, despite the limited accuracy of quantum operations, we incorporate iterative refinement techniques both within and outside the proposed IPM iterations. 
The proposed algorithm has a quantum complexity of $\mathcal{O}(n^{1.5} \kappa_A \log(\frac{1}{\epsilon}))$ queries to QRAM and $\mathcal{O}(n^2 \log(\frac{1}{\epsilon}))$ classical arithmetic operations. Our method outperforms the worst-case complexity of prior classical and quantum IPMs, offering a significant improvement in scalability and computational efficiency.
\end{abstract}

\begin{IEEEkeywords}
Quantum Computing, Linear Optimization, Interior Point Method, Quantum Linear System Algorithm.
\end{IEEEkeywords}

\section{Introduction}
The standard form linear optimization (LO) problem is  minimizing a linear objective function over a polyhedron, formally defined as
\begin{equation} \label{eq:primall problem}\tag{P}
    \begin{aligned}
    \min_{x\in \mathbb{R}^{n}}\  c^T&x \\
    \text{s.t. }
    Ax &= b, \\
    x &\geq 0,
    \end{aligned}
\end{equation}
where $A\in \mathbb{R}^{m\times n}, b\in\mathbb{R}^m$, and $c\in\mathbb{R}^n$. It is well-known that there is a dual problem associated with the primal problem, as
\begin{equation} \label{eq:dual problem}\tag{D}
    \begin{aligned}
    \max_{(y,s)\in \mathbb{R}^{m}\times\mathbb{R}^{n}} \  b^Ty\ \ & \\
    \text{s.t. }
    A^Ty +&s = c,\\
    &s \geq 0.
    \end{aligned}
\end{equation}
By the strong duality theorem \cite{roos2005interior}, all optimal solutions, if they exist, belong to the set $\mathcal{PD}^*$, which is defined as
\begin{align*}
\mathcal{PD}^*=&\{(x,y,s)\in\mathbb{R}^{n+m+n}:\ Ax=b,\ A^Ty+s=c,\\ & x^Ts=0, \ (x,s)\geq0 \}.
\end{align*}

One of the first prevailing methods for finding an optimal solution for LO problems is the Simplex algorithm\cite{bertsimas1997introduction}. Despite their efficiency in many practical problems, simplex methods can take exponential time to solve the problem in the worst case \cite{klee1972good}.

The most efficient methods for solving large-scale LO problems are interior point methods (IPMs). The era of IPMs was launched by Karmarkar’s projective method for LO problems\cite{Karmarkar1984_New}. Many variants of IPMs have been proposed for LO problems and nonlinear optimization problems\cite{roos2005interior,nesterov1994interior}. 
When an LO problem has feasible interior solutions, it is proved that its perturbed optimality conditions define a unique analytic curve, called the central path, where the limiting point is an optimal solution. IPMs typically start from a point close to the central path and use Newton's method to iteratively track the central path until the solution is close enough to the limiting point. 
To get an $\epsilon$-approximate solution $x$ to the LO problem such that $x^Ts\leq \epsilon$, IPMs need $\mathcal{O}(\sqrt{n}\log(1/\epsilon))$ iterations\cite{roos2005interior}. However, the bottleneck of IPMs lies in the cost of each IPM iteration.

In each IPM iteration, a Newton linear system needs to be solved.
To achieve the $\mathcal{O}(\sqrt{n}\log(1/\epsilon))$ IPM iteration complexity, the Newton step obtained from the Newton linear system solution has to be feasible. Theoretically, one can use direct methods, including Cholesky factorization, to accurately solve the Newton linear system, which maintains feasibility easily. However, such direct methods have $\mathcal{O}(n^3)$ complexity and are thus not applicable to large-scale problems.
To handle large-scale problems, people utilize iterative methods for solving Newton linear systems, including conjugate gradient (CG) methods~\cite{al2009convergence, Monteiro2003_Convergence}. These iterative methods have $\mathcal{O}(ns\kappa \frac{1}{\epsilon})$ complexity for the Newton linear systems arising from IPMs for LO problems, where $\kappa$ is the condition number of the matrix of the Newton system and $s$ denotes the number of nonzero elements at each row of the coefficient matrix. Although iterative methods have a milder dependence on dimension than direct methods, they require an inexact framework of IPMs and preconditioning techniques to handle errors and moderate condition number dependence.

To improve the worst-case complexity of IPMs, partial update techniques were deployed to calculate the inexact Newton direction via a few rank-one updates for the inverse of the NES matrix. This approach leads to the best total complexity of $\Ocal(n^3L)$ arithmetic operations for solving LO problems \cite{roos2005interior}. 
Recently, this idea has been fortified by using techniques like fast matrix multiplication, spectral sparsification, and stochastic central path methods \cite{lee2015efficient, cohen2021solving}. By utilizing these approximations, the complexity of IPMs can be improved to $\Ocal(n^{\omega}\log(\frac{n}{\epsilon}))$, where $\epsilon$ is the target optimality gap and $\omega <2.3729$  is the matrix multiplication constant \cite{brand2020}.

In another direction, first-order methods are adopted for solving large-scale LO problems. One of the recent algorithms is the primal-dual hybrid gradient (PDHG) method. PDHG methods are primarily developed for convex optimization problems, focusing on imaging problems\cite{chambolle2011first}. Recently, Applegate et al. proposed a practical PDHG method for LO problems, showing its potential in solving certain large-scale LO problems as well as commercial solvers~\cite{applegate2021practical}. Note that there is no theoretical complexity bound for these algorithms.

Recently, quantum computing has demonstrated increasing potential in speeding up certain algorithms. Quantum linear system algorithms (QLSAs) can solve linear systems faster than classical algorithms under certain conditions \cite{harrow2009quantum,childs2017quantum}. Researchers have been developing quantum IPMs (QIPMs) by leveraging the power of QLSAs to solve Newton linear systems \cite{kerenidis2020quantum}.
QIPMs have been developed for LO \cite{mohammadisiahroudi2025improvements,apers2023quantum}, convex quadratic optimization\cite{wu2023inexact}, second-order conic optimization\cite{augustino2021inexact}, and semidefinite optimization\cite{kerenidis2020quantum,augustino2023quantum,mohammadisiahroudi2025quantum}. 
For LO problems, the complexity of QIPMs has been improved through a line of research \cite{mohammadisiahroudi2024efficient, mohammadisiahroudi2025improvements, augustino2021inexact} and most recently, Wu et al. introduced a QIPM based on dual Logarithmic barrier method~\cite{wu2024quantum}, with $\mathcal{O}(\sqrt{n}\log(n/\epsilon))$ IPM iteration complexity. In each IPM iteration, their QIPM needs almost $\mathcal{O}(n)$ query to QRAM and $\mathcal{O}(n^2)$ 
classical arithmetic operations. It is shown that the most expensive step in the QIPM is the matrix-vector multiplication, which is implemented on classical computers. In this work, we propose implementing matrix-vector multiplication steps on quantum computers to further improve the performance of QIPMs. Additionally, we introduce almost-exact IPMs, which are novel types of IPM achieving best-known iteration complexity $\Ocal(\sqrt{n}\log(\frac{1}{\epsilon}))$ using exponentially precise solutions for Newton Systems. To achieve exponentially small errors, we use iterative refinement techniques inside and outside of the proposed QIPM.

\subsection{Contributions}
\begin{itemize}
\item We propose a novel almost-exact  QIPM that achieves an optimal runtime scaling of $\mathcal{O}(n^2)$, improving upon both classical and existing quantum IPMs in terms of dimensional complexity.

\item Our framework supports inexact quantum operations, including quantum matrix inversion, matrix-vector multiplication, and matrix-matrix multiplication, by incorporating iterative refinement. This capability marks a significant step toward developing a fully quantum IPM.

\item In contrast to prior QIPMs, our method eliminates the need for any classical matrix operations, resulting in a total classical arithmetic cost of only $\mathcal{O}\left(n^2 \log\left(\frac{1}{\epsilon}\right)\right)$. This represents an asymptotic improvement of $\mathcal{O}(\sqrt{n})$ over previous QIPM approaches.

\item Notably, in most existing QIPMs, replacing QLSA with classical methods such as Conjugate Gradient yields comparable complexity, thus offering no definitive quantum speedup. In contrast, our approach demonstrates a provable quantum advantage, as any classical counterpart would require at least $\mathcal{O}(n^{2.5})$ operations.
\end{itemize}

The remainder of the paper is organized as follows. Section~\ref{sec: AE} introduces the proposed almost-exact framework for quantum interior point methods. Section~\ref{sec: Sub} describes the quantum subroutines used within the framework. In Section~\ref{sec: IR}, we show how iterative refinement is incorporated to achieve high precision. Section~\ref{sec: total} analyzes the overall computational complexity of our approach. Finally, Section~\ref{sec: con} concludes the paper.

\section{Almost-exact Quantum interior point method}\label{sec: AE}
In this section, we propose an almost exact quantum interior point method for solving linear optimization problems. Assuming that the input data is all integer, we denote the binary length of the input data by 
\begin{align*}
    L&=mn+m+n+\sum_{i,j}\lceil\log_2(|a_{ij}|+1)\rceil\\
&+\sum_{i}\lceil\log_2(|c_{i}|+1)\rceil+\sum_{j}\lceil\log_2(|b_{j}|+1)\rceil,
\end{align*}
where $a_{ij}$ represents the $ij$-element of matrix $A$. The optimal partition is also defined as 
\begin{align*}
    \Bcal&=\{j\in\{1,\dots,n\}:x^*_j>0 \text{ for some }(x^*,y^*, s^*)\in \mathcal{PD}^*\},\\
    \Ncal&=\{j\in\{1,\dots,n\}:s^*_j>0 \text{ for some }(x^*,y^*, s^*)\in \mathcal{PD}^*\}.
\end{align*}
The following lemma is a classical result first proved by \cite{Khachiyan1980}.
\begin{lemma}\label{lemma: L bound}
Let $(x^*,y^*,s^*)\in \mathcal{PD}^*$ be a basic solution. If $x_i^*>0$, then we have $x_i^*\geq 2^{-L}$. If $s_i^*>0$, then we have $s_i^*\geq 2^{-L}$.
\end{lemma}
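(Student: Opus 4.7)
The plan is to reduce both inequalities to Cramer's rule applied to the basic-solution structure of (P) and (D), and then bound the resulting integer determinants using Hadamard's inequality together with the definition of $L$.

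For the primal bound, I would start from the fact that if $(x^*,y^*,s^*)\in\mathcal{PD}^*$ is a basic solution, the positive components of $x^*$ are indexed by at most $m$ linearly independent columns of $A$; collecting these columns yields a nonsingular $m\times m$ integer submatrix $A_B$ with $A_B x_B^* = b$. Cramer's rule then gives
\[
x_i^* \;=\; \frac{\det(A_B^{(i)})}{\det(A_B)},
\]
where $A_B^{(i)}$ is obtained from $A_B$ by replacing its $i$-th column with $b$. Since $A$ and $b$ are integer, both determinants are integers, and $x_i^*>0$ forces $|\det(A_B^{(i)})|\geq 1$. It therefore suffices to establish $|\det(A_B)|\leq 2^L$, which I would obtain from Hadamard's inequality applied column-wise,
\[
|\det(A_B)| \;\leq\; \prod_{j\in B}\|A_{\cdot j}\|_2 \;\leq\; \prod_{j\in B}\prod_{i=1}^{m}(|a_{ij}|+1),
\]
and then taking $\log_2$ and comparing term-by-term with the definition of $L$, using the $mn+m+n$ dimensional contribution to absorb the slack in $\|a\|_2\leq \prod_i(|a_i|+1)$.

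For the dual bound, the same basis determines $(y^*,s^*)$ via $A_B^{T}y^* = c_B$ and $s_i^* = c_i - A_{\cdot i}^{T}y^*$. Substituting the Cramer-rule expression for $y^*$ and clearing $\det(A_B)$ from the denominator writes $s_i^*$ as an integer determinant built from entries of $A$ and $c$, divided by $\det(A_B)$. If $s_i^*>0$, the numerator is a nonzero integer, so its absolute value is at least $1$, and the Hadamard bound on $|\det(A_B)|$ — now routinely extended to absorb the $|c_i|$ contributions via the $\sum_i\lceil\log_2(|c_i|+1)\rceil$ term in $L$ — gives $s_i^*\geq 2^{-L}$.

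The main obstacle is the careful bookkeeping in the Hadamard step: one must verify that the definition of $L$, which mixes the dimensional terms $mn+m+n$ with the bit-length sums for the entries of $A$, $b$, and $c$, is simultaneously strong enough to bound $|\det(A_B)|$ from the primal side and the full numerator/denominator pair from the dual side. Once this counting is in place, both inequalities follow immediately from a single application of Cramer's rule.
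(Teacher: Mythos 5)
Your Cramer's-rule-plus-Hadamard argument is the standard proof of this classical bound and is correct; the bookkeeping you flag as the main obstacle goes through cleanly, since $\sqrt{\sum_i a_{ij}^2}\le\prod_i(|a_{ij}|+1)$ holds directly (square both sides and expand the product), giving $|\det(A_B)|\le 2^{\sum_{i,j}\lceil\log_2(|a_{ij}|+1)\rceil}\le 2^{L}$ without even needing the $mn+m+n$ slack, and for the dual only the denominator bound is needed because the integer numerator contributes a factor of at least $1$. Note that the paper supplies no proof of its own --- it simply cites Khachiyan (1980) --- and your argument is precisely the classical one behind that citation.
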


Lemma~\ref{lemma: L bound} is a fundamental result in the complexity analysis of IPMs. It means that after a sufficient number of iterations of IPMs, a decision variable can be rounded to zero if it is smaller than $2^{-L}$. Then, by a rounding procedure, one can find an exact optimal solution for linear optimization \cite{roos2005interior, wright1997primal}. In the proposed algorithm, all calculations happen on a quantum machine with precision $\epsilon=2^{-tL}$ where $t$ is a small constant, less than $10$. This high level of accuracy justifies describing the algorithm as \emph{almost-exact}. The only calculation that happens on a classical computer is updating the solution and vector-vector summation. In this paper, we use the dual logarithmic barrier method, which has a simple framework. At each step of the dual log barrier IPM, we need to solve the following Newton system
\begin{equation}
\begin{bmatrix}
    I & A^T \\
    AS^{-2}  & 0
\end{bmatrix} \begin{bmatrix}
    \Delta s \\
    \Delta y
\end{bmatrix} =
\begin{bmatrix}
    0 \\
    \frac{1}{\mu}(b-AS^{-1}e)
\end{bmatrix},
\end{equation}
where $S={\rm diag}(s)$.
Let $\hat{\Delta s}= S^{-2} \Delta s$, we can have the system 
\begin{equation}\label{eq: AS}
\begin{bmatrix}
    S^{2} & A^T \\
    A  & 0
\end{bmatrix} \begin{bmatrix}
    \hat{\Delta s} \\
    \Delta y
\end{bmatrix} =
\begin{bmatrix}
    0 \\
    \frac{1}{\mu}(b-AS^{-1}e)
\end{bmatrix}.
\end{equation}
One can easily verify that $M=\begin{bmatrix}
    S^{2} & A^T \\
    A  & 0
\end{bmatrix}$ is a symmetric positive definite matrix, and so the system \eqref{eq: AS} has a unique solution \cite{roos2005interior}. Given $s$, one can build block-encodings of implementing  matrix $M$ and preparing state $\sigma =\begin{bmatrix}
    0 \\
    \frac{1}{\mu}(b-AS^{-1}e)
\end{bmatrix} $ efficiently, assuming that matrix $A$ stored in QRAM in advance. The general steps of the proposed almost exact QIPM using a short-step framework are described in Algorithm~\ref{alg:AE-QIPM}.

\begin{algorithm}
\caption{Almost Exact QIPM}\label{alg:AE-QIPM}
    \begin{algorithmic}
        \STATE \textbf{INPUT} Dual feasible solution $(y^0,s^0)$, $\mu^0>0$, $0<\theta<1$, and $\delta \left((y^0, s^0), \mu^0\right) < \frac{1}{2}$, where $\delta$ is the proximity measure from \cite{roos2005interior,wu2024quantum}
        \STATE Store $A, b,c $ on QRAM
        \STATE $k \gets 1$
        \WHILE{$\mu > 2^{-2L}$}
        \STATE $(\Delta y^{k}, \hat{\Delta s}^{k} ) \gets $ Solve system \ref{eq: AS} with precision $\epsilon = 2^{-tL}$
        \STATE $y^{k+1}\gets y^{k}+ \Delta y^{k}$
        \STATE $s^{k+1}\gets s^{k}+ (S^{k})^{2} \hat{\Delta s}^{k}$
        \STATE $\mu^{k+1} = (1-\theta) \mu^k$
        \STATE $k \gets k+1$
        \ENDWHILE
    \end{algorithmic}
\end{algorithm}
As we analyze the worst-case complexity, we assume $m=\Ocal(n)$ and matrices are fully dense.
\begin{theorem}\label{theo: IPM}
    Number of iterations for Algorithm~\ref{alg:AE-QIPM} has upper bound $$\Ocal (\sqrt{n}L).$$
\end{theorem}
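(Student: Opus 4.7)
The plan is to combine the classical short-step IPM iteration-count argument with a perturbation argument that absorbs the inexactness of the quantum Newton solve. The structure mirrors the standard dual logarithmic barrier analysis \cite{roos2005interior,wu2024quantum}, with care taken to ensure the precision $\epsilon = 2^{-tL}$ is sufficient to maintain the proximity invariant.

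First, I would show that, with the update parameter chosen as $\theta = \Theta(1/\sqrt{n})$, the proximity measure $\delta((y^{k+1},s^{k+1}),\mu^{k+1}) < \tfrac{1}{2}$ is preserved from iteration to iteration, assuming the Newton system \eqref{eq: AS} is solved exactly. This is essentially a quotation of the short-step dual-barrier analysis: after one Newton step, the proximity with respect to $\mu^k$ becomes quadratically small, and the multiplicative shrinkage $\mu^{k+1} = (1-\theta)\mu^k$ with $\theta = \Theta(1/\sqrt{n})$ inflates the proximity by only a constant factor, yielding $\delta < \tfrac{1}{2}$ again.

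Second, I would count the iterations needed to reach the termination criterion $\mu \leq 2^{-2L}$. Since $\mu^k = (1-\theta)^k \mu^0$, the criterion is met after
\begin{equation*}
k \;\geq\; \frac{\log(\mu^0) + 2L\log 2}{-\log(1-\theta)} \;=\; \mathcal{O}\!\left(\tfrac{1}{\theta}\,(\log \mu^0 + L)\right)
\end{equation*}
steps. Using the standard bound $\log\mu^0 = \mathcal{O}(L)$ for a suitably initialized starting point (available for integer-data LO by Lemma~\ref{lemma: L bound} and the usual big-M or self-dual embedding construction) together with $\theta = \Theta(1/\sqrt{n})$ yields $k = \mathcal{O}(\sqrt{n}\,L)$, which is exactly the claim.

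The main obstacle, and the only place where the algorithm deviates from textbook analysis, is justifying that the above short-step argument still goes through when the Newton system is solved only to precision $\epsilon = 2^{-tL}$. I would argue that the inexact direction can be written as the exact direction plus an error term whose norm, after amplification by factors like $\|(S^k)^{-1}\|$, stays polynomially bounded in $2^L$; since $(S^k)^{-1}$ is controlled polynomially in $2^L$ (via Lemma~\ref{lemma: L bound} together with the invariant that iterates stay in the neighborhood of the central path), choosing a fixed constant $t$ (the paper notes $t<10$ suffices) makes the contribution of the error to the proximity $\delta$ at most $o(1)$, so the invariant $\delta < \tfrac{1}{2}$ is preserved verbatim. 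Once the invariant is preserved, the iteration-count calculation above applies unchanged, giving the $\mathcal{O}(\sqrt{n}\,L)$ upper bound.
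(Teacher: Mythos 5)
Your outline of the iteration count itself (geometric decrease of $\mu$ with $\theta=\Theta(1/\sqrt n)$, termination at $\mu\le 2^{-2L}$, $\log\mu^0=\mathcal{O}(L)$) matches the standard short-step bookkeeping and is fine. The gap is in how you handle the inexactness. You treat the error as a perturbation of the Newton \emph{direction} for the original problem and claim the proximity invariant $\delta<\tfrac12$ for the original problem is preserved. But the first block row of System~\eqref{eq: AS} is precisely the equation $\Delta s + A^T\Delta y = 0$ that keeps the iterate dual feasible; solving it inexactly produces a residual $\xi_i$ so that $A^Ty^{k}+s^{k}=c+\sum_{i\le k}\xi_i$. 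The iterates therefore leave the dual feasible set of $(A,b,c)$, and the feasible short-step analysis --- including the very definition of the proximity measure $\delta$ and the duality-gap argument at termination --- no longer applies verbatim to the original problem, no matter how small the per-step error is. An absolute $o(1)$ bound on the error's contribution to $\delta$ does not repair this: the issue is not the size of $\delta$ but that the invariant being propagated is stated for feasible points.

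The paper resolves exactly this point with a reinterpretation you are missing: the inexact iterates are \emph{exactly} dual feasible for the perturbed problem $(A,b,c+r^k)$ with $r^k=\sum_{i=1}^k\xi_i$, so the feasible inexact-IPM theorem of \cite{wu2024quantum} can be applied to that perturbed problem. The technical work is then to show that the direction actually computed is close enough to the (unknown, ``artificial'') exact Newton direction of the perturbed problem --- this is where conditions \eqref{eq: condition iter1} come in, and note that they are \emph{relative} bounds of the form $\|\tilde S_0^{-1}\xi_1\|_2\le 0.033\,\delta_{\tilde c}(\tilde s_0,\mu_0)$, not absolute $o(1)$ bounds; the degenerate case $\delta_{\tilde c}\approx 0$ has to be handled separately (the paper skips the Newton step when the right-hand side is tiny). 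Finally, one must argue that a $2^{-2L}$-optimal solution of the perturbed problem, with $\|r^k\|\le 2^{(1-t)L}$, can be rounded to an optimal solution of the original problem via Lemma~\ref{lemma: L bound}. Your proposal would need to be restructured around this perturbed-problem device (or replaced by a genuine infeasible-IPM analysis, which would not give the same constants) to close the gap.
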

We prove the theorem in the next section.

\subsection{Proof of Theorem~\ref{theo: IPM}}
Suppose we start with a strictly feasible solution $(x_0,y_0,s_0)$. In the dual logarithmic barrier IPM, we do not compute the value of $x$ and $y$ but they exist. We have
\begin{equation*}
    \begin{aligned}
        Ax_0 = b,\ A^T y_0 + s_0 = c, \ s_0>0.
    \end{aligned}
\end{equation*}
Then we use a quantum subroutine to compute an inexact $\Delta s_0$ with associated error $\xi_1$. After a full Newton step, we have
\begin{equation*}
    Ax_1 = b,\ A^T y_1 + s_1 = c + \xi_1,\ s_1>0.
\end{equation*}
Now we get a feasible solution for problem~$(A,b,c+\xi_1)$. We do another full Newton step, then we have
\begin{equation*}
    \begin{aligned}
        Ax_2 = b,\ A^Ty_2 + s_2 = c+\xi_1 + \xi_2,\ s_2>0.
    \end{aligned}
\end{equation*}
We can keep doing this until we have
\begin{equation*}
    \begin{aligned}
        Ax_k = b,\ A^Ty_k + s_k = c+ \sum_{i=1}^k\xi_i,\ s_k>0.
    \end{aligned}
\end{equation*}
Then, we can rewrite all of them into
\begin{equation*}
    \begin{aligned}
        Ax_j = b,\ A^Ty_j + s_j +\sum_{j+1\leq k}^k\xi_{l} = c + r^k,\ s_j>0,
    \end{aligned}
\end{equation*}
where $r^k = \sum_{i=1}^k\xi_i$.
This implies we obtained a series of feasible iterates for problem $(A,b,c+r^k)$. When their Newton steps are obtained exactly for problem $(A,b,c+r^k)$, then this series converges to an optimal solution for the problem in $\mathcal{O}(\sqrt{n})$ iterations.
However, if their Newton steps are inexact but satisfy the conditions in \cite{wu2024quantum}, the $\sqrt{n}$ complexity still holds.
But these Newton steps are artificial steps because we do not know exactly the errors $\xi_i$. We need to show that the actual Newton steps we inexactly compute are close enough to these artificial Newton steps, and the inexactness is acceptable for the convergence conditions.

In the first iteration, the actual and artificial Newton steps are computed as
\begin{equation*}
    \begin{aligned}
        \Delta s_0 &= -A^T \left( A S_0^{-2} A^T \right)^{-1} \frac{1}{\mu_0} \left( b- \mu_0 A S_0^{-1} e \right) +\xi_1\\
        \Delta \tilde{s}_0 &= -A^T \left( A \tilde{S}_0^{-2} A^T \right)^{-1} \frac{1}{\mu_0} \left( b- \mu_0 A \tilde{S}_0^{-1} e \right),
    \end{aligned}
\end{equation*}
where
\begin{equation*}
    \begin{aligned}
        \tilde{S}_0 = S_0+ r^k.
    \end{aligned}
\end{equation*}
According to \cite{wu2024quantum}, we need 
\begin{equation*}
    \begin{aligned}
        \left\|  \tilde{S}_0^{-1} (\Delta s_0 - \Delta \tilde{s}_0)\right\|_2 \leq 0.1\delta_{\tilde{c}}(\tilde{s}_0, \mu_0),
    \end{aligned}
\end{equation*}
where $\delta_{\tilde{c}}$ is the proximity measure for the perturbed problem $(A, b, \tilde{c})$ with $\tilde{c} = c + r^k$.
This condition can be guaranteed when
\begin{equation}\label{eq: condition iter1}
    \begin{aligned}
        \left\| (S_0 \tilde{S}_0^{-1} (I - S_0 \tilde{S}_0^{-1})) \right\|_2 &\leq 0.033 \delta_{\tilde{c}}(\tilde{s}_0, \mu_0),\\
        \left\| I - (S_0 \tilde{S}_0^{-1})^2 \right\|_2 &\leq 0.033,\\
        \left\| \tilde{S}_0^{-1} \xi_1 \right\|_2 &\leq 0.033 \delta_{\tilde{c}}(\tilde{s}_0, \mu_0).
    \end{aligned}
\end{equation}
Notice that all three conditions can be satisfied by pushing $\xi_i$ to be small as long as $\delta_{\tilde{c}}$ is not zero, which can be inferred by the approximate value of $\delta$. 
We discuss the value of $\xi_i$ later in the section. 
This proves that our inexact Newton step is a feasible inexact Newton step for the perturbed problem. 
Then, according to Theorem 3.3 of \cite{wu2024quantum}, we have the $\mathcal{O}(\sqrt{n}L)$ complexity.

After $\Ocal(\sqrt{n}L)$ iterations, we have an $\tilde{x}>0$ such that 
 \begin{align*}
        A\tilde{x}&=b, \\
        A^T y^k + s^k &= c+r^k,\\
        (\tilde{x})^T s^k &\leq 2^{-2L},
    \end{align*}
where $r^k=\sum_{i=1}^k\xi_i$. It is easy to verify that $(\tilde{x}, y^k, s^k)$ is a $2^{-tL}$-optimal solution for the perturbed problem $(A,b,c+r^k)$, and one can calculate the exact optimal solution by a rounding procedure. It is easy to verify that $\|r^k\|\leq 2^{(1-t)L}$. In the remaining part, we show how we can retrieve an optimal solution of the original problem with a rounding procedure from the optimal solution for the perturbed problem.

It is straightforward to see that $(\tilde{x}, y^k, s^k)$ is in a $2^{(1-t)L}$-neighborhood of the optimal set for the original problem $(A,b,c)$. As the smallest nonzero element of $s^*$ and $x^*$ is greater than $2^{-L}$, using partitions $B$ and $N$ of this solution, by solving a constrained least squares problem, an optimal solution for the original problem can be obtained. For the details of the rounding procedures, refer to Chapter 7 of \cite{wright1997primal}. 

It is worth noting that the rounding procedures are strongly polynomial-time methods. They can also be quantized using quantum linear system solvers; however, we do not explore the cost and implementation details of rounding procedures in this paper, as it is beyond the scope of this paper.

\section{Quantum Subroutine}\label{sec: Sub}
In this section, we analyze the complexity of building and solving system \eqref{eq: AS}. We use the general scheme of the Quantum Tomography framework of \cite{mohammadisiahroudi2024quantum,mohammadisiaroudi2023exponentially}. We assume that we have access to a large enough QRAM, and we store data $A, b, c$ initially on QRAM with worst-case $\Ocal(n^2)$ complexity. At each state, we need to store $s$ on QRAM and build and solve System \eqref{eq: AS} using the iterative quantum linear solver of \cite{mohammadisiaroudi2023exponentially}.
\begin{algorithm}
\caption{Quantum Linear Solver}\label{alg:QLSA}
    \begin{algorithmic}
        \STATE \textbf{INPUT} $(A,b,c)$ stored on QRAM,
        \STATE Store $s$ on QRAM
        \STATE $k \gets 1$
        \STATE $z^k \gets 0$
        \WHILE{$\|z^k - z^{k-1}\|  > 2^{-4L}$}
        \STATE Prepare State $\ket{r^k} = \ket{\sigma - M z^k}$
        \STATE Apply inverse of block encoding of $M$ using QSVT \cite{chakraborty2018power}
        \STATE Extract classical solution $\frac{p^k}{\|p^k\| }= \frac{M^{-1}r^k}{\|M^{-1}r^k\|} $ via Tomography \cite{van2023quantum} with precision $\epsilon = 10^{-2}$
       \STATE Estimate norm of $\|p^k\|$ and $\|r^k\|$
        \STATE $z^{k+1}\gets z^{k}+ \frac{p^{k}}{\|r^k\|}$
        \STATE $k \gets k+1$
        \ENDWHILE
    \end{algorithmic}
\end{algorithm}
At each iteration of Algorithm~\ref{alg:QLSA}, the only classical operation is updating the solution by a vector summation with $\Ocal(n)$ arithmetic operations. In the following, we calculate the cost of quantum operations.
\begin{lemma}\label{lem: block1}
    Given $A$ and $S$ stored on QRAM, the following statements are true:
    \begin{itemize}
        \item We can construct a block-encoding of $M$ using $\Ocal(\text{polylog}(\frac{n}{\epsilon}))$ queries to QRAM.
        \item We can prepare the the state $\ket{r}$ using $\Ocal(\text{polylog}(\frac{n}{\epsilon}))$ queries to QRAM.
        \item We can apply $M^{-1}$ using $\tilde{\Ocal}_{n,\kappa, \frac{1}{\epsilon}}(\kappa\|A\|_F)$ queries to QRAM. \footnote{The $\widetilde{\Ocal}_{\alpha, \beta} \left( g(x) \right)$ notation indicates that quantities polylogarithmic in $\alpha, \beta$ and $g(x)$ are suppressed.}
        \item Norm estimation of $p^k$ and $r^k$ costs $\tilde{\Ocal}_{n,\kappa, \frac{1}{\epsilon}}(\kappa\|A\|_F)$ queries to QRAM.
    \end{itemize}
\end{lemma}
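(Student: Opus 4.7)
The plan is to prove the four claims in order, chaining together standard primitives from the block-encoding literature (QRAM data-structure encodings, linear combinations of block-encodings, and QSVT-based matrix inversion and amplitude estimation). Throughout, I treat $S$ as a diagonal matrix whose entries have been classically updated and written to QRAM at the start of the QLSA call, and use the QRAM data structure of Kerenidis–Prakash to obtain a block-encoding of any matrix with subnormalization equal to its Frobenius norm using $\mathcal{O}(\mathrm{polylog}(n/\epsilon))$ queries per use.

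For the first bullet, I would build block-encodings of the four blocks separately and assemble them. QRAM directly yields block-encodings of $A$ and $A^T$ with subnormalization $\|A\|_F$. For the diagonal block $S^2$, I would either multiply two block-encodings of $S$ (Chakraborty et al., Lemma for products) or construct a direct block-encoding of $S^2$ from the classical values stored in QRAM; the subnormalization is $\|S^2\|_F$, which is controlled along the central path. Then I would combine the four blocks using the standard construction (controlled on one extra qubit) to obtain a block-encoding of $M$ with subnormalization $\mathcal{O}(\|A\|_F + \|S^2\|_F) = \mathcal{O}(\|A\|_F)$ (since on a near-central iterate $\|S^2\|_F$ is absorbed in the dependence) at a cost of $\mathcal{O}(\mathrm{polylog}(n/\epsilon))$ queries.

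For the second bullet, preparing $\ket{r^k} = \ket{\sigma - Mz^k}$ decomposes into preparing $\ket{\sigma}$ and $\ket{Mz^k}$ and then taking their LCU combination. To prepare $\ket{\sigma}$, I would first build $\ket{AS^{-1}e}$ by applying the block-encoding of $AS^{-1}$ (obtained as a product of the block-encodings of $A$ and $S^{-1}$, the latter available diagonally from QRAM), then combine with $\ket{b}$ via LCU to produce $\ket{b - AS^{-1}e}$, and pad to the block structure. The vector $z^k$ is classical, so $\ket{z^k}$ is prepared via its QRAM data structure; applying the block-encoding of $M$ then yields $\ket{Mz^k}$. An additional LCU step (with coefficients $1, -1$) produces $\ket{r^k}$ at an overall cost of $\mathcal{O}(\mathrm{polylog}(n/\epsilon))$ queries.

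For the third bullet, I would invoke QSVT-based matrix inversion from Chakraborty–Gily\'en–Jeffery on the block-encoding of $M$ built above. Since $M$ is symmetric positive definite, QSVT implements $M^{-1}$ (up to the usual $1/\kappa$ rescaling) with $\widetilde{\mathcal{O}}(\alpha \kappa)$ queries to the block-encoding, where $\alpha$ is its subnormalization; substituting $\alpha = \mathcal{O}(\|A\|_F)$ from the first bullet gives the claimed $\widetilde{\mathcal{O}}_{n,\kappa,1/\epsilon}(\kappa \|A\|_F)$ bound. For the fourth bullet, the norms $\|p^k\|$ and $\|r^k\|$ are estimated by amplitude estimation on the ancilla flag of the block-encodings used to prepare the respective (unnormalized) states, which incurs the same asymptotic query cost as one application of the solver. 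The main obstacle is ensuring that the composite block-encoding of $M$ has subnormalization $\mathcal{O}(\|A\|_F)$ rather than something larger like $\|M\|_F + \|A\|_F$; this requires the usual central-path bound on $\|S\|$, which we inherit from the IPM analysis, and careful accounting in the LCU combining step so that the weights do not blow up the effective norm.
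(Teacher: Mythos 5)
Your proposal is correct and follows essentially the same route as the paper, which simply cites Propositions 1--6 of Augustino et al.\ \cite{augustino2023quantum}; those propositions establish exactly the primitives you spell out (QRAM-based block-encodings of the blocks, LCU assembly of $M$ and of $\ket{\sigma - Mz^k}$, QSVT inversion with cost $\widetilde{\Ocal}(\kappa\|A\|_F)$, and amplitude-estimation-based norm estimation). Your closing remark about controlling the subnormalization by $\Ocal(\|A\|_F)$ rather than $\|M\|_F$ is a legitimate point that the paper also leaves implicit, deferring it to the cited propositions and the central-path bounds on $\|S\|$.
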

The proof of Lemma \ref{lem: block1} is the direct result of Prepositions 1 to 6 of \cite{augustino2023quantum}. 

\begin{lemma}\label{lem: qta iteration}
    The number of iterations of Algorithm~\ref{alg:QLSA} is at most $\Ocal(L)$.
\end{lemma}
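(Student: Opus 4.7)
Proof Proposal.

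The plan is to recognize Algorithm~\ref{alg:QLSA} as the quantum iterative refinement scheme of \cite{mohammadisiaroudi2023exponentially} applied to the linear system $Mz=\sigma$, and to show that the iterates $\{z^k\}$ converge geometrically to the true solution with a constant rate depending only on the fixed tomography precision $\epsilon_\mathrm{tomo}=10^{-2}$ and on the constant precision of the two norm estimates. Set $z^\ast := M^{-1}\sigma$, $r^k := \sigma-Mz^k$, and $e^k := z^k - z^\ast$; note that $M^{-1}r^k = -e^k$, so the exact Newton correction from $z^k$ is $-e^k$.

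The heart of the proof is a per-iteration contraction $\|e^{k+1}\|\le\rho\,\|e^k\|$ with $\rho<1$ constant. Writing $\tilde{p}^k := p^k/\|r^k\|$ for the update, so that $z^{k+1}-z^k=\tilde{p}^k$, a direct calculation gives $e^{k+1} = e^k + \tilde{p}^k = \tilde{p}^k - M^{-1}r^k$; hence $\|e^{k+1}\|$ is exactly the error of the update as an approximation to the exact Newton correction $M^{-1}r^k$. Lemma~\ref{lem: block1} guarantees that the tomographed direction $p^k/\|p^k\|$ agrees with $M^{-1}r^k/\|M^{-1}r^k\|$ to within $\epsilon_\mathrm{tomo}$ in $\ell_2$, while the norm estimates of $\|p^k\|$ and $\|r^k\|$ are accurate to comparable constant relative error. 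Combining these bounds yields $\|\tilde{p}^k - M^{-1}r^k\|\le\rho\,\|M^{-1}r^k\| = \rho\,\|e^k\|$ for a constant $\rho<1$, which is the desired contraction.

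Given the contraction, the iteration count follows by a routine logarithmic calculation. Starting from $z^0=0$, the initial error is $\|e^0\| = \|M^{-1}\sigma\|$. Since $(A,b,c)$ is integer of binary length $L$ and along the iterates of Algorithm~\ref{alg:AE-QIPM} the dual slack $s$ stays bounded away from zero by at least $2^{-\Ocal(L)}$ in the central-path neighborhood (cf.\ Lemma~\ref{lemma: L bound}), both $\|\sigma\|$ and $\|M^{-1}\|$ are at most $2^{\Ocal(L)}$, so $\|e^0\|\le 2^{\Ocal(L)}$. The while-loop terminates once $\|z^{k+1}-z^k\| = \|\tilde{p}^k\|\le 2^{-4L}$, which (up to the constant-relative quantum error) is equivalent to $\|e^k\|\lesssim 2^{-4L}$. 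Iterating the contraction gives $\|e^k\|\le\rho^k\cdot 2^{\Ocal(L)}$, so the stopping threshold is reached after $k\log(1/\rho)=\Ocal(L)$ iterations, i.e., $k=\Ocal(L)$.

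The main obstacle is the $\kappa$-free contraction: one must carefully combine the $\ell_2$-tomography error on the unit vector with the two independent constant-precision norm estimates through the nonlinear rescaling by $\|r^k\|$, and verify that the resulting relative error in the update direction is uniformly below $1$, in particular independent of the condition number of $M$. A naive bound leaks a factor of $\kappa$ into the contraction, which would destroy the $\Ocal(L)$ count; the scheme of \cite{mohammadisiaroudi2023exponentially} is designed precisely to cancel this $\kappa$-dependence, and the task reduces to verifying that the same cancellation applies to the particular saddle-point matrix $M$ used here.
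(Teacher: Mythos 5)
Your proposal is correct and takes essentially the same route as the paper, which gives no independent argument and simply defers to the iterative-refinement QLSA analysis of \cite{mohammadisiaroudi2023exponentially}: you correctly reconstruct that analysis as a per-iteration contraction with a constant, $\kappa$-independent rate (because the tomography and norm-estimation errors are relative to $\|M^{-1}r^k\|=\|e^k\|$ rather than to $\|r^k\|$), combined with $\|e^0\|\le 2^{\Ocal(L)}$ and the $2^{-4L}$ stopping threshold to get $\Ocal(L)$ iterations. The only cosmetic slip is attributing the tomography precision guarantee to Lemma~\ref{lem: block1}, which only bounds query costs; that guarantee comes from the tomography subroutine of \cite{van2023quantum} invoked with constant precision.
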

The proof of Lemma \ref{lem: qta iteration} is based on \cite{mohammadisiaroudi2023exponentially}. Additionally, the total complexity of Algorithm~\ref{alg:QLSA} is based on the analysis provided by \cite{mohammadisiaroudi2023exponentially}. 

\begin{theorem}
    Assuming $(A, b, c)$ is stored on QRAM, Algorithm~\ref{alg:QLSA} can find a $2^{-tL}$-precision solution for System~\eqref{eq: AS} with 
    $$\tilde{\Ocal}_{n\kappa L}(n\kappa\|A\|_F)$$
    queries to QRAM.
\end{theorem}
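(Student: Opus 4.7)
The plan is to derive the overall query complexity by multiplying the number of outer refinement iterations of Algorithm~\ref{alg:QLSA} by the per-iteration cost of its quantum primitives, both of which have already been quantified above. Concretely, Lemma~\ref{lem: qta iteration} guarantees that $\Ocal(L)$ outer iterations suffice to drive $\|z^k-z^{k-1}\|$ below $2^{-4L}$, which by the iterative refinement analysis of~\cite{mohammadisiaroudi2023exponentially} yields a $2^{-tL}$-precision solution of System~\eqref{eq: AS} for any fixed small constant $t$. The underlying mechanism is that each outer iteration contracts the residual $\sigma - M z^k$ by a constant factor, so after $\Ocal(L)$ steps the residual drops exponentially in $L$, and a standard backward-error argument converts this into the same exponential accuracy on $z^k$ itself (modulo a $\kappa$ factor, which is already inside the $\tilde{\Ocal}$).

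Next, I would account the per-iteration cost using Lemma~\ref{lem: block1}. Building the block-encoding of $M$ and preparing the residual state $\ket{r^k} = \ket{\sigma - M z^k}$ together cost only $\Ocal(\text{polylog}(n/\epsilon))$ queries to QRAM, which is negligible. Applying the block-encoded inverse $M^{-1}$ via QSVT costs $\tilde{\Ocal}_{n,\kappa,1/\epsilon}(\kappa\|A\|_F)$ queries per invocation. The dominant subroutine is pure-state tomography at constant precision on the normalized state $M^{-1}\ket{r^k}/\|M^{-1}\ket{r^k}\|$, which by~\cite{van2023quantum} requires $\tilde{\Ocal}(n)$ copies of the state; since each copy is produced at cost $\tilde{\Ocal}_{n,\kappa,1/\epsilon}(\kappa\|A\|_F)$, the tomography step contributes $\tilde{\Ocal}_{n,\kappa,1/\epsilon}(n\kappa\|A\|_F)$ queries per outer iteration. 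Norm estimation of $\|p^k\|$ and $\|r^k\|$ adds only a lower-order $\tilde{\Ocal}_{n,\kappa,1/\epsilon}(\kappa\|A\|_F)$.

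Multiplying the $\Ocal(L)$ outer iterations by this per-iteration quantum cost yields the claimed bound $\tilde{\Ocal}_{n\kappa L}(n\kappa\|A\|_F)$, with the factor of $L$ absorbed into the $\tilde{\Ocal}$ suppression. The main obstacle in a fully rigorous write-up is not this bookkeeping but rather justifying that the fixed tomography precision $10^{-2}$, composed with the multiplicative QSVT approximation error and the norm estimation error, is in fact sufficient to deliver the constant-factor geometric contraction assumed by Lemma~\ref{lem: qta iteration}. This requires carefully propagating the $\ell_2$ tomographic error through the scaling by the estimated norm $\|r^k\|$, showing that the resulting perturbed update produces a residual whose norm is at most a constant fraction of the previous residual, uniformly in $k$ and independent of the current scale. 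Once this uniform contraction is established, the $\Ocal(L)$ iteration count is immediate and the total complexity follows by the product argument above.
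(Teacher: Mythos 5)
Your proposal matches the paper's (largely implicit) argument: the paper likewise obtains the bound by combining Lemma~\ref{lem: block1} (per-iteration quantum costs, with tomography at constant precision dominating at $\tilde{\Ocal}(n)$ state preparations each costing $\tilde{\Ocal}(\kappa\|A\|_F)$ queries) with the $\Ocal(L)$ iteration count of Lemma~\ref{lem: qta iteration}, deferring the error-propagation details to the cited iterative QLSA reference. The gap you flag --- justifying that constant tomography precision plus norm-estimation error still yields uniform geometric contraction of the residual --- is real but is exactly what the paper also leaves to that reference rather than proving in-line.
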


\section{Proposed IR-AE-QIPM}\label{sec: IR}
In this section, we discuss how to use the iterative refinement method (IR) for LO problems to improve complexity as in \cite{wu2024quantum} and provide the full description of our proposed algorithm. The first iterative refinement for LO has been proposed by \cite{gleixner2016iterative}. Mohammadisiahroudi et al. \cite{mohammadisiahroudi2024efficient} first showed that using iterative refinement can improve the complexity of QIPMs w.r.t precision and condition number. Further, in \cite{mohammadisiahroudi2023inexact}, the quadratically convergent iterative refinement scheme was proposed for feasible IPMs. An IR for dual log-barrier QIPM has been developed in \cite{wu2024quantum}.

In \cite{wu2024quantum}, the iterative refinement method for the LO problem works as follows:
\begin{itemize}
    \item[1.] Start with the original problem and solve it to a low accuracy;
    \item[2.] If the accuracy of the original problem is not enough, construct a refining problem using the current iteration values; otherwise, the algorithm halts;
    \item[3.] Solve the refining problem to a low accuracy and update the solution to the original problem; then, go to step 2.
\end{itemize}

In our proposed algorithm, after each solve, we have a feasible solution to a perturbed problem. To use the iterative refinement method, we need to construct a solution to the original problem from the solution to a perturbed problem.
To do so, we need a projection procedure. We use Algorithm~\ref{alg:QLSA} to solve the following problem
\begin{equation*}
    \begin{aligned}
        \min_{y} \|A^T y + s_k - c\|_2,
    \end{aligned}
\end{equation*}
which is equivalent to
\begin{equation*}
    \begin{aligned}
        AA^T y = A(c-s_k).
    \end{aligned}
\end{equation*}
Then we have 
\begin{equation*}
    \begin{aligned}
        s = c- A^T y.
    \end{aligned}
\end{equation*}
According to the argument in the previous section, this $(y, s)$ is feasible for the original problem with a duality gap bounded by twice of the low accuracy.
Then, we can use the IR to refine the solution to high accuracy as in \cite{wu2024quantum}.

To get the full complexity of the proposed algorithm, we discuss the accuracy needed for $\xi_i$. In the first iteration, we need conditions \eqref{eq: condition iter1}. Theoretically, $\delta_{\tilde{c}}$ might be zero, which implies the corresponding Newton system right-hand side is zero. We do not need to solve such Newton systems. Instead, we check the norm of the right-hand side vector. If the norm is too small $(\leq 2^{-4L})$, we update $\mu$ without computing the Newton step. Then, conditions \eqref{eq: condition iter1} can be guaranteed when
\begin{equation*}
    \begin{aligned}
        \|\xi_i\|_2 \leq {\rm poly}\left(\frac{2^{-4L} }{n\kappa_{AS_0^{-1}}} \right) \approx {\rm poly} (2^{-4L}), \ \forall i\in[k].
    \end{aligned}
\end{equation*}
This bound also works for the remaining iterations.
Now, we present the pseudocode of our proposed algorithm and the main theorem.

\begin{algorithm}
\caption{Iteratively Refined Almost Exact QIPM }\label{alg:IR-AE-QIPM}
    \begin{algorithmic}
        \STATE \textbf{INPUT} Dual feasible solution $(y^0,s^0)$, $\mu^0>0$, $0<\theta<1$, $\delta \left((y^0, s^0), \mu^0\right) < \frac{1}{2}$, $\nabla^{(0)} = 1$, $0<\zeta\ll\tilde{\zeta}$
        \STATE Store $A, b,c $ on QRAM
        \STATE $k \gets 1$
        \STATE $(y_1, s_1) \gets$ Solve dual problem with accuracy $\tilde{\zeta}$
        \WHILE{$\nabla^{(k-1)} < \frac{1}{\zeta}$}
        \STATE $\nabla^{(k)} \gets \nabla^{(k-1)}\times\frac{1}{\tilde{\zeta}}$
        \STATE Construct the IR problem as in \cite{wu2024quantum}
        \STATE $(\hat{y}, \hat{s}) \gets $ Solve IR problem with accuracy $\tilde{\zeta}$ and project into proper subspace
        \STATE $y^{k+1}\gets y^{k}+ \frac{1}{\nabla^{(k)}} \hat{y}$
        \STATE $s^{k+1} \gets c- A^T y^{(k)}$
        \STATE $k \gets k+1$
        \ENDWHILE
    \end{algorithmic}
\end{algorithm}

\begin{theorem}[Lemma 13 of \cite{wu2024quantum}]\label{theo: IR iter}
    Algorithm \ref{alg:IR-AE-QIPM} terminates after $\Ocal(\frac{\log (\zeta)}{\log (\hat{\zeta})})$ iterations.
\end{theorem}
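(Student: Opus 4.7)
The plan is to unwind the geometric growth of the precision tracker $\nabla^{(k)}$ and match it against the termination condition, mirroring Lemma~13 of \cite{wu2024quantum} but in the almost-exact setting developed above.

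First, I would analyze the deterministic recursion driving the outer loop. Inspecting Algorithm~\ref{alg:IR-AE-QIPM}, the update $\nabla^{(k)} \gets \nabla^{(k-1)} \cdot \tilde{\zeta}^{-1}$ starts from $\nabla^{(0)} = 1$, so a one-line induction gives the closed form $\nabla^{(k)} = \tilde{\zeta}^{-k}$. Since $\tilde{\zeta} < 1$, this is a strictly increasing geometric sequence, so the stopping test $\nabla^{(k-1)} \geq 1/\zeta$ is eventually satisfied.

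Next, I would solve the termination inequality directly. Substituting the closed form yields $\tilde{\zeta}^{-(k-1)} \geq \zeta^{-1}$, i.e., $(k-1)\log(1/\tilde{\zeta}) \geq \log(1/\zeta)$, hence the loop exits as soon as $k-1 \geq \log(\zeta)/\log(\tilde{\zeta})$. This gives the claimed bound $k = \Ocal\bigl(\log(\zeta)/\log(\tilde{\zeta})\bigr)$. The core combinatorial step is therefore immediate; the work is in justifying that the recursion correctly tracks accuracy on the original problem.

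For that justification, I would invoke the refinement analysis of \cite{wu2024quantum}: each inner solve returns a solution of the current refining problem with accuracy $\tilde{\zeta}$, the projection step (solving $AA^T y = A(c-s_k)$ via Algorithm~\ref{alg:QLSA}) lifts this back to a feasible dual pair for the original problem, and the scaling by $1/\nabla^{(k)}$ in the update implies that the effective duality gap on the original problem scales as $\Ocal(\tilde{\zeta}^{k})$. Thus $1/\nabla^{(k)}$ is a valid upper bound on the accuracy achieved after $k$ outer iterations, making the termination test meaningful.

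The main obstacle is not the counting argument itself but verifying that the almost-exact inner solve does not corrupt the geometric contraction. Concretely, I must show that the inexact Newton residuals $\xi_i$ accumulated inside the QIPM call satisfy the feasibility-and-proximity conditions of \cite{wu2024quantum} for the refining problem, so that its solution really is accurate to level $\tilde{\zeta}$. This is exactly the bound $\|\xi_i\|_2 \leq \mathrm{poly}(2^{-4L})$ established in the preceding discussion of Algorithm~\ref{alg:IR-AE-QIPM}, together with the feasibility preservation argument from Section~\ref{sec: AE} that recasts the perturbed iterates as exact iterates of a slightly modified problem. Once these two ingredients are in place, the bound $\Ocal(\log(\zeta)/\log(\tilde{\zeta}))$ follows directly from the computation above.
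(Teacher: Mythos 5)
Your proposal is correct and matches the intended argument: the paper does not reprove this statement (it imports it verbatim as Lemma~13 of \cite{wu2024quantum}), and your unwinding of the recursion $\nabla^{(k)}=\tilde{\zeta}^{-k}$ against the stopping test $\nabla^{(k-1)}\geq 1/\zeta$ is exactly the standard counting argument behind that lemma, with the additional accuracy-tracking discussion correctly deferred to the inexactness bounds of Sections~\ref{sec: AE} and~\ref{sec: IR}. The only discrepancy is notational and lies in the paper, not in your proof: the theorem statement writes $\hat{\zeta}$ where the algorithm and your argument use $\tilde{\zeta}$.
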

For our purpose, we use $\zeta=2^{-tL}$ and $\hat{\zeta}$ is constant. Thus the outer iteration has iteration bound $\Ocal(L)$. We also have a matrix-vector product at each step of this IR scheme with cost $\Ocal(n^2)$ arithmetic operations.

The major challenge in AE-QIPM Algorithm~\ref{alg:AE-QIPM} is that the complexity of the quantum solver depends on the condition number, and the condition number grows in each iteration of AE-QIPM. As in IR Algorithm~\ref{alg:IR-AE-QIPM}, we stop AE-QIPM early at fixed precision. It has been shown that with early termination $\kappa^{(k)}=\Ocal(\kappa_0)$ where $\kappa_0$ is the condition number of the coefficient matrix for $(y^0, s^0)$ and it is constant \cite{wu2024quantum}. Figure~\ref{fig:IR-cond}, adopted from \cite{mohammadisiahroudi2023inexact}, illustrates how the condition number of the linear systems in IR-IF-QIPM behaves for a degenerate LO problem. We remark that IR does not improve the initial condition number.

\begin{figure}
    \centering
    \includegraphics[scale=0.6]{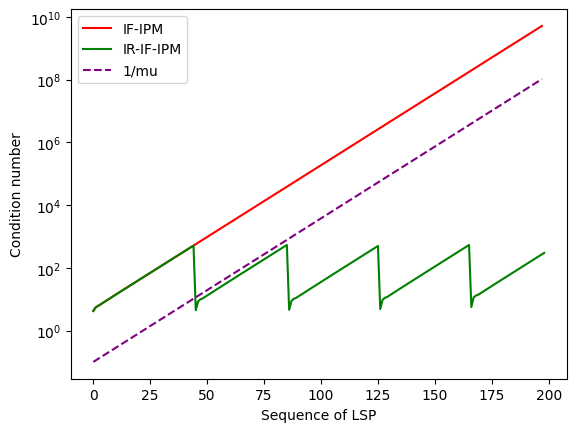}
    \caption{The effect of IR on the condition number of linear systems arising in QIPM for degenerate LO \cite{mohammadisiahroudi2023inexact}.}
    \label{fig:IR-cond}
\end{figure}

\section{Total Complexity}\label{sec: total}
In this section, we put together all the elements discussed in the previous sections to calculate the total worst-case complexity of IR-AE-QIPM.
\begin{theorem}\label{theorem: main}
    Algorithm~\ref{alg:IR-AE-QIPM} produces a $2^{(1-t)L}$ precise optimal solution of the LO problem using at most $$\tilde{\Ocal}_{\kappa_0, n, \|A\|_F}(n^{1.5}L\kappa_0))$$ queries to QRAM and $\mathcal{O}(n^2L)$ classical arithmetic operations.
\end{theorem}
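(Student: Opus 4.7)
My plan is to decompose the runtime of Algorithm~\ref{alg:IR-AE-QIPM} into three nested layers---the outer iterative refinement loop, the middle AE-QIPM loop, and the inner Algorithm~\ref{alg:QLSA} loop---then multiply the per-layer costs together and add the classical overhead. First, I would instantiate Theorem~\ref{theo: IR iter} with target precision $\zeta=2^{-tL}$ and constant $\tilde{\zeta}$ to conclude that the outer IR loop runs for $\Ocal(L)$ iterations. At each such outer iteration, the refined subproblem must be solved only to the fixed accuracy $\tilde{\zeta}$, so Theorem~\ref{theo: IPM} (with the $L$ factor in its bound replaced by $\log(1/\tilde{\zeta}) = \Ocal(1)$) yields $\Ocal(\sqrt{n})$ inner IPM iterations per call to AE-QIPM. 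I would also invoke the early-termination property of IR, used in \cite{wu2024quantum} and illustrated in Figure~\ref{fig:IR-cond}, to conclude that the condition number of the Newton matrix $M$ stays $\Ocal(\kappa_0)$ throughout the whole run.

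Second, I would cost each Newton solve. Per IPM iteration Algorithm~\ref{alg:QLSA} is invoked at constant precision, so by Lemma~\ref{lem: qta iteration} it needs only $\Ocal(1)$ QTA outer iterations. Combining the building blocks of Lemma~\ref{lem: block1}---block-encoding of $M$, state preparation for $\sigma$, application of $M^{-1}$ via QSVT, and norm estimation---with the tomography step that extracts a classical Newton direction, the query cost works out to $\tilde{\Ocal}(n\kappa_0 \|A\|_F)$ per solve, where the $n$ factor comes from pure-state tomography on an $n$-dimensional register and the $\kappa_0\|A\|_F$ factor from QSVT-based inversion. Multiplying the three layers then yields $\Ocal(L)\cdot\Ocal(\sqrt{n})\cdot\tilde{\Ocal}(n\kappa_0\|A\|_F)=\tilde{\Ocal}_{\kappa_0, n, \|A\|_F}(n^{1.5} L \kappa_0)$ queries to QRAM, matching the quantum claim.

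Third, for the classical count I would note that each IR outer iteration performs an $\Ocal(n^2)$ matrix-vector product to build the refining problem and to compute $s^{k+1}=c-A^T y^{(k)}$, producing $\Ocal(n^2 L)$ in total. The $\Ocal(n)$ vector updates inside each AE-QIPM and QLSA iteration aggregate to at most $\Ocal(n^{1.5}L)$, which is dominated; the tomography read-out and QRAM write-backs of $s$ are likewise $\Ocal(n)$ per IPM iteration and absorbed. The $2^{(1-t)L}$ precision guarantee follows from the IR convergence analysis in Section~\ref{sec: IR} together with the feasibility argument of Section~\ref{sec: AE}, which bounds the accumulated residual $\|r^k\|$ by $2^{(1-t)L}$ and shows that the projection step in Algorithm~\ref{alg:IR-AE-QIPM} produces a dual-feasible iterate for the original problem.

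The main obstacle will be reconciling the per-step error requirement $\|\xi_i\|_2\leq \mathrm{poly}(2^{-4L})$ noted before Algorithm~\ref{alg:IR-AE-QIPM} with the \emph{constant} precision used on each refined subproblem: one must verify that IR's rescaling of the coordinates converts constant relative error on the refined problem into an exponentially small residual on the original, and that the condition-number bound $\Ocal(\kappa_0)$ indeed persists under this rescaling (so that the quantum cost does not blow up in later IR rounds). Once these two compatibility checks are handled---for which I would follow the template of \cite{wu2024quantum}---the remainder of the proof is bookkeeping of the product $L\cdot\sqrt{n}\cdot n\kappa_0\|A\|_F$ and the $L\cdot n^2$ classical cost.
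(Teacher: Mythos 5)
Your three-layer decomposition --- $\Ocal(L)$ outer IR rounds by Theorem~\ref{theo: IR iter}, $\Ocal(\sqrt{n})$ AE-QIPM iterations per round once the subproblem accuracy is the constant $\tilde{\zeta}$, and $\tilde{\Ocal}(n\kappa_0\|A\|_F)$ queries per Newton solve from Lemma~\ref{lem: block1} and tomography --- is the same accounting the paper performs (indeed, your version is more explicit about where the single factor of $L$ comes from than the paper's own proof, which simply multiplies the IR iteration bound by the per-round cost and absorbs the remaining logarithmic factors and $\|A\|_F$, the latter via normalization, into the $\tilde{\Ocal}$ subscript). The classical count and the precision/rounding argument also match.

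The one genuine misstep is your claim that ``Algorithm~\ref{alg:QLSA} is invoked at constant precision, so by Lemma~\ref{lem: qta iteration} it needs only $\Ocal(1)$ QTA outer iterations.'' This conflates two different precisions: the constant accuracy $\tilde{\zeta}$ is the \emph{duality-gap} target of each refined subproblem, whereas the Newton systems inside AE-QIPM must still be solved to precision $2^{-tL}$ (Algorithm~\ref{alg:AE-QIPM} requests this explicitly, and the convergence argument of Theorem~\ref{theo: IPM} needs $\|\xi_i\|_2 \leq \mathrm{poly}(2^{-4L})$ so that the accumulated perturbation $r^k$ stays exponentially small and the rounding procedure applies). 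Consequently each Newton solve costs the full $\Ocal(L)$ iterations of Lemma~\ref{lem: qta iteration}, not $\Ocal(1)$. This does not change the displayed bound --- the paper treats that factor as polylogarithmic in $1/\epsilon = 2^{tL}$ and suppresses it in $\tilde{\Ocal}$ --- but with your ``constant precision'' reading the correctness of the $2^{(1-t)L}$ guarantee would fail, since a constant-residual Newton step cannot keep the perturbed cost $c + r^k$ within $2^{(1-t)L}$ of $c$. The resolution you gesture at in your last paragraph (IR rescaling converting constant error to exponentially small error) governs the outer loop only; it does not relax the inner Newton-solve tolerance.
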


\begin{proof}
    The number of iterations of IR is bounded by $\Ocal(L)$ based on Theorem~\ref{theo: IR iter}. At each iteration, we have $\Ocal(n^2)$ cost of a classical matrix-vector product and the cost of AE-QIPM to solve the refining problem. Additionally, to address $\|A\|_F$ in the complexity, one can initially normalize data by $\|A\|_F$, and consequently, final precision should be increased by $\|A\|_F$, which appears in polylog. The quantum complexity is $\tilde{\Ocal}_{n,L, \|A\|_F}(n^{1.5}\kappa_0L)$ queries to QRAM, and $\Ocal(nL)$ arithmetic operations at each step of AE-QIPM. Thus, the total queries to QRAM is 
    $$\tilde{\Ocal}_{\kappa_0, n}(n^{1.5}L\kappa_0)),$$ 
    and the total number of classical arithmetic operations is bounded by 
    $\mathcal{O}(n^2L).$
\end{proof}

As we can see, the worst-case complexity of IR-AE-QIPM has optimal dimension dependence, $n^2$. For the worst-case scenario, storing the dense matrix $A$ requires $\Ocal(n^2)$ operations. Thus, the end-to-end worst-case complexity of an LO solver with respect to dimension can not be less than $n^2$.

Table~\ref{tab:compelxities} compares the complexity of the proposed IR-AE-QIPM with other classical and quantum IPMs. As we can see, the total complexity of our approaches outperforms previous complexities. In the last line of the table, we show the complexity of the classical counterpart of the IR-AE-IPM using CG to solve the system. As we can see, the total complexity can not be better than $n^{2.5}$ in the classical version. This exhibits a clear quantum advantage compared to other algorithms in the literature. It should be mentioned that the quantum complexity of all QIPMs is the number of queries to QRAM. Without QRAM assumptions, some overheads may appear in complexities, although the quantum central path method of \cite{augustino2023central} is QRAM-free.
\renewcommand{\arraystretch}{1.5}
\begin{table*}[ht]
\caption{Worst-case Complexity of different IPMs for LO}
\label{tab:compelxities}
\centering
\resizebox{\textwidth}{!}{%
\begin{tabular}{cccc}
\hline
Algorithm                & Linear System Solver & Quantum Complexity & Classical Complexity  \\ \hline\hline
IPM with Partial Updates \cite{roos2005interior}            &            Low rank updates          &                    &        $\Ocal(n^{3}L)$                       \\ \hline
Feasible IPM \cite{roos2005interior}            &  Cholesky             &                    &                $\Ocal(n^{3.5}L)$               \\ \hline
II-IPM   \cite{Monteiro2003_Convergence}                          & PCG                   &                    &                $\Ocal(n^{5}L\bar{\chi}^2)$        \\ \hline
Robust IPM \cite{brand2020}                        & Fast Mat-Mul and Partial Update                  &                    &                $\Ocal(n^{w}L)$              \\
\hline\hline
Quantum Central Path \cite{augustino2023central}
 &          Hamiltonian Evolution      &          $\tilde{\Ocal}(n^{3.5}\frac{\omega}{\epsilon})$          &                              \\ \hline

IR-IF-IPM \cite{mohammadisiahroudi2025improvements}                      & PCG                  &                    &    $\tilde{\Ocal}_{\mu^0}(n^{3.5}L\bar{\chi}^2)$     \\ \hline
IR-IF-QIPM \cite{mohammadisiahroudi2023inexact}                      & QLSA+QTA             &             $\tilde{\Ocal}_{n,\kappa_{A}, \|A\|,\|b\|,\mu^0}(n^{1.5}L\kappa_{A}^2\omega^5)$       &          $\tilde{\Ocal}_{\mu^0}(n^{2.5}L)$                  \\ \hline
 IR-IF-QIPM \cite{mohammadisiahroudi2025improvements}                      & Precond+QLSA+QTA             &      $ \widetilde{\Ocal}_{ n, \left\| A \right\|_F, \frac{1}{\epsilon}} ( n^{1.5} L\bar{\chi}^2      ) $              &            $\tilde{\Ocal}_{\mu^0}(n^{2.5}L)$                 \\ \hline
 IPM with approximate Newton steps~\cite{apers2023quantum}     & Q-spectral Approx.     & $\tilde{\mathcal{O}}_{n, \frac{1}{\zeta}}(n^{5.5})$                   & $\tilde{\mathcal{O}}_{\frac{1}{\zeta}}(n^{1.5} )$  \\ \hline
Quantum Dual-log Barrier \cite{wu2024quantum}                                       & QLSA+QTA               & $\widetilde{\mathcal{O}}_{n, \kappa_0, \mu^0,\|A\|_F}\left(n^{1.5} \kappa_0 L \right)$ & $\mathcal{O}( {n}^{2.5}L)$ \\ \hline
 Proposed IR-AE-QIPM                       & IQLSA+Quant Mat-Vec           &             $\tilde{\Ocal}_{n,\kappa_{0},\|A\|_F}(n^{1.5}L\kappa_{0})$       &          ${\Ocal}(n^{2}L)$                    \\ \hline
 Classical IR-AE-IPM                       & CGM             &                 &            ${\Ocal}(n^{2.5}L\kappa_0)$                  \\ \hline
\end{tabular}%
}
\end{table*}

\section{Conclusion}\label{sec: con}
In this paper, we introduce a novel Almost-Exact Interior Point Method framework for solving linear optimization problems. At each iteration, all computations, including matrix-vector products, are performed on a quantum machine, achieving a clear quantum speed-up compared to classical IPMs. To maintain exponentially small errors, we incorporate iterative refinement both within and outside the IPM steps. The overall complexity achieves an optimal scaling of $n^2$.

The main limitation of the proposed method is its reliance on QRAM, whose physical realization remains challenging. However, by using circuit-based QRAM \cite{park2019circuit}, novel QSVTs without block-encoding \cite{chakraborty2025quantum} or adopting a sparse-access input model, one can study the complexity of a QRAM-free version of the proposed algorithm.
Furthermore, a detailed resource estimate, e.g., the analysis of \cite{tu2025towards}, is needed to investigate the practical advantage of the proposed approach.

Another interesting future research direction can be developing a primal-dual Almost-exact QIPM for linear optimization as well as semidefinite optimization problems. Such primal-dual frameworks can be applied to self-dual embedding formulations, which do not require an initial interior solution.

{\hyphenpenalty=100000
\bibliographystyle{IEEEtran}
\bibliography{references}}

\end{document}